\documentclass[prl,aps,twocolumn,superscriptaddress,showpacs,showkeys]{revtex4-1}
\usepackage{amsmath,amsthm,amssymb,graphicx,subfigure,xcolor}
\usepackage[unicode=true]{hyperref}
\hypersetup{
     colorlinks=true,       		
     linkcolor=blue,          	
     citecolor=red,            
     urlcolor=magenta,           	
 }

\newtheorem{theorem}{Theorem}
\newtheorem*{theorem*}{Theorem}

\newtheorem*{corollary*}{Corollary}

\newtheorem*{lemma*}{Lemma}

\newtheorem*{proposition*}{Proposition}
\theoremstyle{definition}

\newtheorem*{definition*}{Definition}
\theoremstyle{remark}
\newtheorem{remark}{Remark}
\newtheorem*{remark*}{Remark}

\begin{document}

\title{Generalized Hardy's Paradox}

\author{Shu-Han Jiang}
\affiliation{Theoretical Physics Division, Chern Institute of Mathematics, Nankai University, Tianjin 300071, People's Republic of China}
\affiliation{School of Physics, Nankai University, Tianjin 300071, People's Republic of China}

\author{Zhen-Peng Xu}
\affiliation{Theoretical Physics Division, Chern Institute of Mathematics, Nankai University, Tianjin 300071, People's Republic of China}
\affiliation{Departamento de F\'{\i}sica Aplicada II, Universidad de  Sevilla, E-41012 Sevilla, Spain}

\author{Hong-Yi Su}
\email{hysu@gscaep.ac.cn}
\affiliation{Graduate School of China Academy of Engineering Physics, Beijing 100193, People's Republic of China}

\author{Arun Kumar Pati}
\email{akpati@hri.res.in}
 \affiliation{Quantum Information and Computation Group, Harish-Chandra Research Institute, Chhatnag Road, Jhunsi, Allahabad 211 019, India}

\author{Jing-Ling Chen}
\email{chenjl@nankai.edu.cn}
\affiliation{Theoretical Physics Division, Chern Institute of Mathematics, Nankai University, Tianjin 300071, People's Republic of China}
\affiliation{Centre for Quantum Technologies, National University of Singapore, 3 Science Drive 2, Singapore 117543}

\date{\today}
\begin{abstract}
Here we present the most general framework for $n$-particle Hardy's paradoxes, which include Hardy's original one and Cereceda's extension as special cases. Remarkably, for any $n\ge 3$ we demonstrate that there always exist generalized paradoxes (with the success probability as high as $1/2^{n-1}$) that are stronger than the previous ones in showing the conflict of quantum mechanics with local realism. An experimental proposal to observe the stronger paradox is also presented for the case of three qubits. Furthermore, from these paradoxes we can construct the most general Hardy's inequalities, which enable us to detect Bell's nonlocality for more quantum states.
\end{abstract}

\pacs{03.65.Ud, 03.67.Mn, 42.50.Xa}

\maketitle

\emph{Introduction.}---Hardy's paradox is an important all-versus-nothing (AVN) proof of Bell's nonlocality, a peculiar phenomenon that has its roots deep in the famous debate raised by Einstein, Podolsky and Rosen (EPR) in 1935~\cite{EPR}. Hardy's original proof~\cite{Hardy92,Hardy93}, for two particles, has been considered as ``the simplest form of Bell's theorem'' and ``one of the strangest and most beautiful gems yet to be found in the extraordinary soil of quantum mechanics''~\cite{Mermin94}. To date, a number of experiments has been carried out to confirm the paradox in two-particle systems~\cite{Hardy-Exp-1,Hardy-Exp-2,Hardy-Exp-3,Hardy-Exp-4,Hardy-Exp-5,Hardy-Exp-6,Hardy-Exp-7,Hardy-Exp-8,Hardy-Exp-9}; theoretically, Hardy's paradox has been generalized from the two-qubit to a multi-qubit family \cite{Cereceda2004}.
The two-particle Hardy's paradox can be stated in an inspiring way as follows~\cite{Chen2003}: In any local theory, if the events $A_2<B_1$, $B_1<A_1$, and $A_1<B_2$ never happen, then naturally the event $A_2 < B_2$ must never happen. According to quantum theory, however, there exist two-particle entangled states and local projective measurements that break down these local conditions; that is, in terms of probabilities,
 \begin{align*}
 &P(A_2 < B_1) = 
 P(B_1 < A_1) = 
 P(A_1 < B_2) = 0,\\ 
 &~~{\rm and}~~P(A_2 < B_2) > 0,  
\end{align*}
where the last condition evidently conflicts with the prediction of local theory, leading to a paradox. In \cite{Cereceda2004} the author showed that for the $n$-qubit Greenberger-Horne-Zeilinger (GHZ) state the maximal success probability (i.e., the last condition above) can reach $[1+\cos\frac{\pi}{n-1}]/2^{n}$.

Moreover, a quantum paradox can be naturally transformed to a corresponding Bell's inequality. For instance, the paradox mentioned above can be associated to the following Hardy's inequality $P(A_2 < B_2)-P(A_2 < B_1)-P(B_1 < A_1)-P(A_1 < B_2) \leq 0 $,
which is  equivalent to Zohren and Gill's
version~\cite{Gill2008}  of the Collins-Gisin-Linden-Massar-Popescu inequalities (i.e., tight Bell's inequalities for two arbitrary $d$-dimensional systems, and the inequality becomes the CHSH  inequality for $d=2$)~\cite{CGLMP}. See also~\cite{saha15} for a connection between Hardy's inequality and Wigner's argument.

Demonstrating the conflict between quantum mechanics and local theories has had a long history ever since the EPR paper. It has brought out many important contributions to both physical foundations and applications, particularly introducing the concept of \emph{entanglement}, viewed as ``the characteristic trait of quantum mechanics'' that distinguishes quantum theory from classical theory~\cite{Schrodinger35}. Among many others, the most important breakthrough was due to Bell who put the debate of the conflict on firm, physical ground in a statistical manner~\cite{Bell}, and it has been regarded as ``the most profound discovery of science''~\cite{Stapp}.
The Clause-Horne-Shimony-Holt (CHSH) inequality~\cite{CHSH}, serving as a revised version of Bell's original one, has been adopted to reveal nonlocality in various experiments, ranging from Aspect's experiment~\cite{Aspect81} in 1981 to some very recent loophole-free Bell-experiment tests~\cite{Hensen,Giustina,Shalm}. On the other hand, differing from the statistical violation of inequalities, the AVN proof of nonlocality allows to demonstrate contradiction in an elegant, logic paradox, such that its experimental practice will be, in principle, simplified to a single-run operation. Among various AVN proofs, the GHZ paradox~\cite{GHZ89} has been carried out experimentally based on entangled photons~\cite{Pan2000}. In spite of that, it applies to three-particle systems~\cite{GHZ89} or more~\cite{HorodeckiRMP,BrunnerRMP}, but has so far defied any two-particle formulation.

Hardy's paradox, with post-selections taken into consideration, therefore stands out among the others, since (i) it applies to the two-party scenario; (ii) it can be generalized to multi-party and high-dimensional scenarios~\cite{Cereceda2004} (hereafter we would like to call Cereceda's version of $n$-qubit Hardy's paradox/inequality as the standard Hardy's paradox/inequality, to distinguish them from the most general ones that we shall present in this paper); and (iii) inequalities constructed based on it allow to detect more entangled states and provide a key element to prove Gisin's theorem~\cite{Choudhary2010,Yu2012} --- which states that any entangled pure state violates Bell's inequality~\cite{Gisin1991}. The GHZ paradox does not share most of these merits (see also the Mermin-Ardehali-Belinskii-Klyshko  inequality~\cite{Mermin1990,Ardehali,BK}, which was also a kind of generalization of CHSH inequality to $n$ qubits, but was not violated by all pure entangled states, even not by all the generalized GHZ states).

In this Letter,
we first present a family of generalized Hardy's paradoxes for $n$ qubits and show that the standard Hardy's paradox is a special case of the family, and that for any $n\geq 3$ one can always have a stronger quantum paradox in comparison to the standard one. Then, we present a family of generalized Hardy's inequalities based on the generalized paradoxes and show that, similar to the paradox, the standard Hardy's inequality is a special case of the family of generalized Hardy's, that some of the generalized Hardy's inequalities are tight based on the numerical computation, and that the generalized Hardy's inequalities for $n\ge 4$ are  stronger than the standard Hardy's inequality based on the visibility criterion.
An experimental proposal to observe the stronger quantum paradoxes in a three-qubit system is also presented.


\emph{Generalized Hardy's Paradox.}---For simplicity, we shall use the notations in \cite{Yu2012} to formalize the generalized $n$-qubit Hardy's paradox.
Consider a system composed of $n$ qubits that are labeled with the index set $I_n=\{1, 2,..., n\}$. For the $k$-th qubit, we choose two observables $\{a_k, b_k\}$ that
take binary values $\{0, 1\}$ in the local realistic model. Let us denote $a_\alpha=\prod_{k \in\alpha} a_k$ and $\overline{b}_\alpha=\prod_{k \in\alpha} \overline{b}_k$ with $\overline{b}_k=1-b_k$ for an arbitrary subset $\alpha\subseteq I_n$, $\overline{k}=I_n/k$ for arbitrary $k \in I_n$ and $\overline{\alpha}=I_n/\alpha$.
Moreover, we denote $|\alpha|$ as the size of the subset $\alpha$, and abbreviate the probability $p(x=1,y=1, \ldots)$ as $p(xy\ldots)$.

We now present the following theorem:
\begin{theorem}
  For any given sizes $|\alpha|$ and $|\beta|$ $(2 \leq |\alpha| \leq n, 1\leq |\beta|\leq |\alpha|)$ satisfying the constraint $|\alpha|+|\beta| \leq n+1$,
then in the LHV model, the following zero-probability conditions
 \begin{align*}
 p(b_{\alpha}a_{\overline{\alpha}}) =  p(\overline{b}_{\beta}a_{\overline{\beta}}) = 0, \;\;\;\; \forall \alpha,\beta \in I_n 
\end{align*}
must lead to the following zero-probability condition
\begin{eqnarray*}
&p(a_{I_n}) = 0.
\end{eqnarray*}
\end{theorem}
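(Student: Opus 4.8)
The plan is to strip the statement down to a purely combinatorial fact about deterministic assignments and then finish with a pigeonhole count. In any LHV model each joint probability is an average over deterministic value-assignments $\lambda$, which fix $a_k(\lambda),b_k(\lambda)\in\{0,1\}$ for every $k\in I_n$, and a probability vanishes exactly when the underlying distribution puts zero weight on every $\lambda$ realizing that event. Hence it is enough to prove the event inclusion $\{\lambda: a_{I_n}(\lambda)=1\}\subseteq\bigcup_{|\alpha'|=|\alpha|}\{\lambda: b_{\alpha'}(\lambda)\,a_{\overline{\alpha'}}(\lambda)=1\}\cup\bigcup_{|\beta'|=|\beta|}\{\lambda:\overline{b}_{\beta'}(\lambda)\,a_{\overline{\beta'}}(\lambda)=1\}$, where $\alpha'$ and $\beta'$ range over all subsets of $I_n$ of the prescribed cardinalities $|\alpha|$ and $|\beta|$. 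Granting this, a union bound immediately yields $p(a_{I_n})\le\sum_{\alpha'}p(b_{\alpha'}a_{\overline{\alpha'}})+\sum_{\beta'}p(\overline{b}_{\beta'}a_{\overline{\beta'}})=0$ by applying the zero-probability hypotheses term by term.

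The combinatorial heart is the inclusion above, and this is where I would spend the effort. Fix a deterministic $\lambda$ with $a_k(\lambda)=1$ for all $k$, and split $I_n$ into $O=\{k:b_k(\lambda)=1\}$ and $Z=\{k:b_k(\lambda)=0\}$, so $|O|+|Z|=n$. If we had both $|O|\le|\alpha|-1$ and $|Z|\le|\beta|-1$ we would get $n\le|\alpha|+|\beta|-2\le n-1$, a contradiction; this is precisely the point at which the hypothesis $|\alpha|+|\beta|\le n+1$ enters, and it is the sharp threshold for the argument. Therefore $|O|\ge|\alpha|$ or $|Z|\ge|\beta|$. If $|O|\ge|\alpha|$, pick any $\alpha'\subseteq O$ with $|\alpha'|=|\alpha|$; then $b_{\alpha'}(\lambda)=1$ since $b_k(\lambda)=1$ on $\alpha'$, while $a_{\overline{\alpha'}}(\lambda)=1$ holds automatically because every $a_k(\lambda)=1$, so $\lambda$ lies in the corresponding set on the right. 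If instead $|Z|\ge|\beta|$, pick $\beta'\subseteq Z$ with $|\beta'|=|\beta|$ and argue symmetrically, using $\overline{b}_{\beta'}(\lambda)=1$ and $a_{\overline{\beta'}}(\lambda)=1$. This establishes the inclusion.

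I do not expect a genuine obstacle here: the argument is elementary once set up this way. The two points that need care are (i) justifying that ``probability zero in an LHV model'' is equivalent to a statement about all deterministic assignments in the support, which is legitimate because every probability in sight is linear in the hidden-variable distribution, and (ii) noting that the zero-probability hypotheses must be invoked for every subset of the given size, not merely one, since the pigeonhole step only guarantees that some suitable $\alpha'$ or $\beta'$ exists. I would also record in the proof that the cardinality constraint is tight: when $|\alpha|+|\beta|\ge n+2$ one can choose $\lambda$ with all $a_k(\lambda)=1$, at most $|\alpha|-1$ of the $b_k(\lambda)$ equal to $1$ and at most $|\beta|-1$ of them equal to $0$, so that every listed event is avoided while $a_{I_n}(\lambda)=1$, and the implication fails; this both explains the hypothesis and shows the result cannot be pushed further in this direction.
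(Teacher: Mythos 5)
Your proposal is correct and follows essentially the same route as the paper: reduce to deterministic value assignments via linearity/convexity of the LHV polytope, then use the same pigeonhole count on how many $b_k$'s equal $1$ versus $0$, with the constraint $|\alpha|+|\beta|\le n+1$ forcing the contradiction (your forward event-inclusion-plus-union-bound phrasing and the sharpness remark are nice but not substantively different).
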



\begin{proof} Note that the above equations are all linear for the LHV model which is a convex polytope, whose extreme points are the deterministic LHV model. Thus, we only need to prove this theorem for the deterministic LHV model, that is,
\begin{align*}
 b_{\alpha}a_{\overline{\alpha}} =
 \overline{b}_{\beta}a_{\overline{\beta}} = 0, \;\;\;\; \forall \alpha,\beta \in I_n 
\end{align*}
must lead to the following zero-probability condition
\begin{eqnarray*}
&a_{I_n} = 0.
\end{eqnarray*}
We shall prove it by \emph{reductio ad absurdum}. Suppose $a_{I_n} \neq 0$, then
in the deterministic LHV model one directly obtains
\begin{align*}
  b_{\alpha}=
  \overline{b}_{\beta}=0, \;\;\;\; \forall \alpha,\beta \in I_n,  
\end{align*}
which implies at least one of $|\beta|$ observables $b_k$'s arbitrarily chosen from the set $\mathcal{B}=\{b_1, b_2, ..., b_n\}$ must take the value ``1'' ---
namely, in the set $\mathcal{B}$ we have at least $n-(|\beta|-1)$ observables equal to 1 --- and which, similarly, implies at least one of $|\alpha|$ observables $b_k$'s arbitrarily chosen from the set $\mathcal{B}$ must take the value ``0''
--- namely, in the set $\mathcal{B}$ we have at least $n-(|\alpha|-1)$ observables equal to 0.
Hence, at most $(|\alpha|-1)$ observables $b_k$'s equal 1. This yields $(|\alpha|-1)\geq n-(|\beta|-1)$, i.e., $|\alpha|+|\beta| \ge n+2$, in contradiction to the constraint $|\alpha|+|\beta| \leq n+1$.
\end{proof}

For the sake of convenience, we label the generalized paradox as $[n;|\alpha|,|\beta|]$-scenario. It can be verified directly that the standard Hardy's paradox is the $[n;n,1]$-scenario by taking $|\alpha|=n, |\beta|=1$. Nevertheless, quantum mechanics gives different prediction that the success probability $p(a_{I_n})$ can be non-zero, thus resulting in a generalized Hardy's paradox, stated as:
\begin{theorem}
  For the generalized GHZ state, by choosing appropriate quantum projective measurements on $n$ qubits, the success probability $p(a_{I_n})$ is always greater than zero, and for any $n\geq 3$ we can always have a stronger quantum paradox in comparison to the standard Hardy's paradox.
\end{theorem}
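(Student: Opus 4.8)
The plan is to prove both parts by writing down explicit quantum realizations on the generalized GHZ state $|\psi_\theta\rangle=\cos\theta\,|0\rangle^{\otimes n}+\sin\theta\,|1\rangle^{\otimes n}$: first a parametrized family that makes the success probability strictly positive, and then a particularly clean member that beats the standard paradox. Since the hypotheses of Theorem~1 are permutation invariant, I would look for realizations in which each qubit carries the same two measurements: $a_k$ with outcome ``$1$'' the projector onto a Bloch vector $|u\rangle$, and $b_k$ with outcome ``$1$'' the projector onto $|v\rangle$, so that $b_k=0$ corresponds to the orthogonal vector $|v^{\perp}\rangle$. Then $p(b_\alpha a_{\bar\alpha})$ depends only on $|\alpha|$ and $p(\bar b_\beta a_{\bar\beta})$ only on $|\beta|$, so all the zero-probability constraints reduce to the two complex equations
\begin{align*}
 \cos\theta\,\langle v|0\rangle^{|\alpha|}\langle u|0\rangle^{\,n-|\alpha|}+\sin\theta\,\langle v|1\rangle^{|\alpha|}\langle u|1\rangle^{\,n-|\alpha|}&=0,\\
 \cos\theta\,\langle v^{\perp}|0\rangle^{|\beta|}\langle u|0\rangle^{\,n-|\beta|}+\sin\theta\,\langle v^{\perp}|1\rangle^{|\beta|}\langle u|1\rangle^{\,n-|\beta|}&=0,
\end{align*}
while the quantity of interest is $p(a_{I_n})=\bigl|\cos\theta\,\langle u|0\rangle^{n}+\sin\theta\,\langle u|1\rangle^{n}\bigr|^{2}$.

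Parametrizing $|u\rangle$ and $|v\rangle$ by their polar and azimuthal angles, I would solve these two equations by matching moduli (which pins down $\tan\theta$ and ties the two polar angles together) and then matching phases (which fixes the azimuthal angles). This leaves a one-parameter family of solutions, along which $p(a_{I_n})$ becomes an explicit rational function of the free parameter; checking that it is not identically zero then gives $p(a_{I_n})>0$. The delicate point — the step I expect to be the main obstacle — is precisely this phase matching: one must verify that a single choice of state and measurements simultaneously annihilates all the exponentially many zero-probability terms, and the phase conditions are sensitive to the parities of $|\alpha|$ and $|\beta|$, so a short case analysis is needed and, where the rigidly symmetric ansatz is over-determined, a residual per-qubit azimuthal freedom (which cancels out of every constraint) has to be reinstated. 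Everything else is elementary trigonometry.

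For the second assertion I would exhibit the member $|\alpha|=|\beta|=2$, which is admissible for every $n\ge 3$ because $|\alpha|+|\beta|=4\le n+1$: take the maximally entangled GHZ state ($\theta=\pi/4$), let $a_k$ measure $\sigma_x$ (outcome ``$1$'' $\leftrightarrow|+_x\rangle$) and $b_k$ measure $\sigma_y$. Using $\langle\pm_y|0\rangle^{2}=1/2$ and $\langle\pm_y|1\rangle^{2}=-1/2$ one checks in one line that $p(b_\alpha a_{\bar\alpha})=p(\bar b_\beta a_{\bar\beta})=0$ for all the relevant subsets, while $p(a_{I_n})=\tfrac12\bigl|\langle +_x|0\rangle^{n}+\langle +_x|1\rangle^{n}\bigr|^{2}=2/2^{n}=1/2^{n-1}$, giving the announced value. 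Comparing with Cereceda's success probability $[1+\cos\frac{\pi}{n-1}]/2^{n}$ for the standard $[n;n,1]$-scenario, the inequality $1/2^{n-1}=2/2^{n}>[1+\cos\frac{\pi}{n-1}]/2^{n}$ is equivalent to $\cos\frac{\pi}{n-1}<1$, which holds for every $n\ge 3$; hence the $[n;2,2]$-paradox is strictly stronger than the standard one, which completes the proof.
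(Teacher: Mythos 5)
Your construction for the headline claim is correct and is essentially the paper's own: on the $n$-qubit GHZ state with every party measuring $\sigma_x$ for $a$ and $\sigma_y$ for $b$, the $[n;2,2]$ zero conditions hold identically because $\langle \pm y|0\rangle^{2}=1/2$ and $\langle \pm y|1\rangle^{2}=-1/2$, and $p(a_{I_n})=1/2^{n-1}>[1+\cos\frac{\pi}{n-1}]/2^{n}=P^{\rm S}_{n}$ for all $n\ge 3$ since $\cos\frac{\pi}{n-1}<1$. This is exactly the paper's Case 2 (equal even sizes, $\gamma=1$, $b_0=b_1=1/\sqrt{2}$, $\theta_b=\theta_a+\pi/2$) specialized to $|\alpha|=|\beta|=2$, which is also the paper's three-qubit experimental configuration; that half of the theorem is fully proved by your argument.

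The gap is in the first assertion. For general $[n;|\alpha|,|\beta|]$ and a generalized GHZ state you only set up the two amplitude equations and announce that matching moduli and phases ``leaves a one-parameter family of solutions,'' explicitly deferring the phase matching as ``the main obstacle''; but that deferred step is precisely the content of the paper's proof, which solves the system explicitly in two cases --- for $|\beta|<|\alpha|$ it takes $m_1=m_2=0$, so $\theta_b=\theta_a+\frac{|\beta|\pi}{|\alpha|-|\beta|}$, fixes $a_0/a_1$ and $b_0/b_1$ in terms of $\gamma=|h_1|/|h_0|$, and obtains a manifestly positive closed form for $p(a_{I_n})$; for $|\alpha|=|\beta|$ even it proceeds as you do. Without carrying this out, positivity in the general scenario is not established. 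Moreover, your proposed remedy for the over-determined cases --- reinstating a per-qubit azimuthal freedom that ``cancels out of every constraint'' --- cannot work precisely because it cancels: since the zero conditions must hold for every subset of the given size, they force the azimuthal difference between the $b$- and $a$-directions to be the same on every qubit, whence the two phase conditions differ by $|\beta|\pi$ when $|\alpha|=|\beta|$, and for odd equal sizes the product ansatz admits no solution at all. The paper sidesteps this by restricting the equal-size case to even $|\alpha|=|\beta|$ (and never claims the odd equal-size case); you should likewise either complete the two-case solution or state the first claim only for the scenarios you actually solve.
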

\begin{proof} Quantum mechanically, let us consider the generalized GHZ state
\begin{eqnarray*}
& |\Psi\rangle_{\rm gGHZ}=h_{0}|0\ldots 0\rangle + h_{1} |1\ldots 1\rangle,
\end{eqnarray*}
with $h_0=|h_0|\geq 0$, $h_1=|h_1|e^{i \theta_{h}}$ (The usual GHZ state corresponds to $|h_0|=|h_1|=1/\sqrt{2}, \;\theta_{h}=0 $). We always assume the measurements $a_{i}$'s, $b_{i}$'s, and $\bar{b}_{i}$'s for the $n$ observers are in the direction $a_{0}|0\rangle + a_{1}e^{i\theta_{a}} |1\rangle$, $b_{0}|0\rangle + b_{1}e^{i\theta_{b}} |1\rangle$ and $b_{1}|0\rangle + b_{0}e^{i(\theta_{b}+\pi)} |1\rangle$ respectively, by direct calculation, we then obtain
\begin{align*}
  p(b_{\alpha}a_{\bar{\alpha}}) &=   \biggr|b_{0}^{|\alpha|}a_{0}^{n-|\alpha|}|h_{0}|+ b_{1}^{|\alpha|}a_{1}^{n-|\alpha|}|h_{1}| e^{i\vartheta}\biggr|^{2},\\
  p(\bar{b}_{\beta}a_{\bar{\beta}}) &=  \biggr |b_{1}^{|\beta|}a_{0}^{n-|\beta|}|h_{0}|
  + b_{0}^{|\beta|}a_{1}^{n-|\beta|}|h_{1}| e^{i\vartheta'}\biggr|^{2},\\
    p(a_{N}) &=  \biggr |a_{0}^{n}|h_{0}| + a_{1}^{n}|h_{1}| e^{i[n\theta_{a}-\theta_{h}]}\biggr|^{2},
\end{align*}
respectively, where $\vartheta=(n-|\alpha|)\theta_{a}+|\alpha|\theta_{b}-\theta_{h}$ and $\vartheta'=(n-|\beta|)\theta_{a}+|\beta| \theta_{b}-\theta_{h}+|\beta|\pi$.

Let $p(b_{\alpha}a_{\bar{\alpha}}) = p(\bar{b}_{\beta}a_{\bar{\beta}}) = 0$, we have equations of angles
\begin{align*}
    &(n-|\alpha|)\theta_{a}+|\alpha|\theta_{b}-\theta_{h}=(2 m_1 +1)\pi,\\
    &(n-|\beta|)\theta_{a}+|\beta| \theta_{b}-\theta_{h}+|\beta|\pi=(2 m_2 +1)\pi,
\end{align*}
with $m_1, m_2=0, 1, 2,...$, and of norms
\begin{align*}
    b_0^{|\alpha|}a_0^{n-|\alpha|} |h_0|=b_1^{|\alpha|}a_1^{n-|\alpha|} |h_1|,\\
    b_1^{|\beta|}a_0^{n-|\beta|} |h_0|=b_0^{|\beta|}a_1^{n-|\beta|} |h_1|.
        \end{align*}
The following arguments are split into two cases:\\
 \emph{Case 1:} $|\beta|<|\alpha|$: we let $m_1=m_2=0$, then we have $\theta_{b}=\frac{|\beta|\pi}{|\alpha|-|\beta|}+\theta_{a}$, $n \theta_{a}-\theta_{h}=(1-\frac{|\alpha||\beta|}{|\alpha|-|\beta|})\pi$, and  $\frac{a_0}{a_1}=(\frac{b_0}{b_1})^{\frac{|\alpha|+|\beta|}{|\alpha|-|\beta|}}=\gamma^{\frac{|\alpha|+|\beta|}{(|\alpha|+|\beta|)n-2|\alpha||\beta|}}$, with $\gamma=\frac{|h1|}{|h0|}$, and so the success probability equals
   \begin{align*}
    &p(a_{I_n})   =\frac{\gamma^2|\kappa_0-\kappa_1|^2}
    {(1+\gamma^{2})(1+\kappa_2)^{n}}>0,~~\kappa_0=e^{i\frac{|\alpha||\beta|}{|\alpha|-|\beta|}\pi},\\
    &~~\kappa_1=\gamma^{\frac{2|\alpha||\beta|}{n(|\alpha|+|\beta|)-2|\alpha||\beta|}},~~\kappa_2=\gamma^{\frac{2(|\alpha|+|\beta|)}{n(|\alpha|+|\beta|)-2|\alpha||\beta|}}.
   \end{align*}
At $\gamma=1$, on the other hand, the success probability equals
   \begin{eqnarray*}
   p(a_{I_n}) &=&\frac{1}{2^{n}}\biggr[1-\cos\biggr(\frac{|\alpha||\beta|\pi}{|\alpha|-|\beta|}\biggr)\biggr].
   \end{eqnarray*}
Note that  $ p(a_{I_n})$ is strictly smaller than $\frac{1}{2^{n-1}}$ because $\frac{|\alpha||\beta|}{|\alpha|-|\beta|}$ cannot be odd. For the standard Hardy's paradox, i.e., $\alpha=n, \beta=1$, it reduces to the result in~\cite{Cereceda2004} as
  \begin{eqnarray}\label{sucorig}
  P^{\rm S}_{n}\equiv p(a_{I_n}) &=&\frac{1}{2^{n}}\biggr[1+\cos\biggr(\frac{\pi}{n-1}\biggr)\biggr],
   \end{eqnarray}
where $P^{\rm S}_{n}$ represents the success probability for the standard Hardy's paradox for the $n$-qubit GHZ state.\\
  \emph{Case 2:} $|\beta|=|\alpha|$: we let $m_1=0, m_2=\frac{|\beta|}{2}$ (here $|\alpha|$ and $|\beta|$ must be even). Note that we have in this case an independent $\theta_h$, then we further let $n\theta_{a}-\theta_{h}=0$ and  $b_0=b_1=1/\sqrt{2}$, $\frac{a_0}{a_1}=\gamma^{\frac{1}{n-|\alpha|}}$, with $\gamma=\frac{|h1|}{|h0|}$, and the success probability equals
     \begin{align*}
   p(a_{I_n})&=\frac{\gamma^2(1+\kappa'_1)^2}{(1+\gamma^2)(1+\kappa'_2)^{n} }>0,\\
   \kappa'_1&=\gamma ^{\frac{|\alpha| }{n-|\alpha|}},~~~\kappa'_2=\gamma ^{\frac{2}{n-|\alpha| }}.
   \end{align*}
The success probability at $\gamma=1$ equals
    \begin{equation*}
  P^{\rm G}_{n}\equiv p(a_{I_n}) =\frac{1}{2^{n-1}},
    \end{equation*}
where $P^{\rm G}_{n}$ represents the success probability for the generalized Hardy's paradox for the $n$-qubit GHZ state.

Combining the above two cases, the theorem is proved as was claimed.
\end{proof}

\begin{remark}
  As an example, given the GHZ state $|\Psi\rangle_{\rm GHZ}=(|00\cdots0\rangle + |11\cdots1\rangle)/\sqrt{2}$ of $n$ qubits, Cereceda~\cite{Cereceda2004} found that the maximal success probability for the standard Hardy's paradox is Eq. (\ref{sucorig})
But, by choosing $|\beta|=|\alpha|$ in the generalized Hardy's paradox, for any $n\ge 3$ we can have a greater success probability (see also Fig.~\ref{fig1}):
\begin{equation}
  P^{\rm G}_{n}\equiv p(a_{I_n}) =\frac{1}{2^{n-1}} > P^{\rm S}_{n}.
    \end{equation}
Indeed for GHZ states with $n\geq 3$, $|\alpha|=|\beta|= {\rm even\; number}$ is the best choice for generalized Hardy's paradoxes~\cite{supp}.
\end{remark}

\begin{remark}
The $[n;|\alpha|=2,|\beta|=1]$-scenario resembles the paradox presented in~\cite{chen2014}, but the former concerns the Bell scenario, while the latter discusses the genuine multipartite nonlocality, which is a subset of the Bell nonlocality; the $[n;2<|\alpha|<n,|\beta|=1]$-scenario is related to the paradox presented in~\cite{wang2016}, which discussed hierarchy of multipartite nonlocality. It is thus of great interest to further investigate possible connections of the results in \cite{chen2014} and \cite{wang2016} with the structure of \textbf{Theorem 1}.
\end{remark}

\begin{remark}
  For the paradox of $[n;|\alpha|,|\beta|]$-scenario, one can have its corresponding generalized Hardy's inequality as
\begin{eqnarray}
&\mathcal{I}[n;|\alpha|,|\beta|;x,y]= F(n;\alpha,\beta; x, y) \; p(a_{I_n})\nonumber\\
 &\;\;\;\;\;\;\;\;\;\;\;\;- x \sum_{\alpha} p(b_{\alpha} a_{\bar{\alpha}}) - y \sum_{\beta } p(\bar{b}_{\beta} a_{\bar{\beta}}) \leq 0,
\end{eqnarray}
with  $x>0, y>0$. Usually for convenience, one can choose $x, y$ as positive integers, and to make the inequality meaningful (i.e., it can be possibly violated by quantum states), one needs to require $F(\alpha,\beta; x, y)>0$. By directly computation, one can determine
\begin{equation*}
F(n;\alpha,\beta; x, y) =\min_{0\le m \le n} \left(x\binom{m}{|\alpha|} + y \binom{n-m}{|\beta|}\right),
\end{equation*}
which is the largest integer that the inequality still holds, and $\binom{m}{k}=\frac{m!}{k!(m-k)!}$ is the binomial coefficient.   For $x=y=1, |\beta|=1$, one has the coefficient as $F(\alpha,\beta; 1, 1)=n-|\alpha|+1$. For $x=y=1$, the family of the generalized Hardy's inequalities is particularly interesting, one may have that: \\
(i) The standard $n$-qubit Hardy's inequality corresponds to $\mathcal{I}[n;|\alpha|=n,|\beta|=1;x=1,y=1]$, which is a family of tight Bell's inequalities; the 22nd Sliwa's inequality~\cite{Sliwa2003} corresponds to $\mathcal{I}[n=3;|\alpha|=2,|\beta|=1;x=1,y=1]$, which is a tight Bell's inequality; also, numerical computation shows that the family of $n$-qubit Bell's inequalities  $\mathcal{I}[n;|\alpha|,|\beta|=1;x=1,y=1]$ is tight~\cite{othertight}; \\
(ii) Based on the visibility criterion, for $n\ge 4$, the generalized Hardy's inequalities can resist more white-noise than the standard Hardy's inequality. For a given $n$-qubit entangled state $\rho$, we can mixed it with the white noise $I_{\rm noise}=\openone^{\otimes n}/2^n$, the resultant density matrix is given by $\rho^V =V \rho+ (1-V) I_{\rm noise}$.
 Resistance to noise can be measured through the threshold visibility $V_{\rm thr}$, below which Bell's inequality cannot be violated. A lower threshold visibility means that quantum state can tolerate a greater amount of noise. Let us consider $\rho$ as the $n$-qubit GHZ state. In Table~\ref{vis}, we compare the threshold visibility of the generalized Hardy's inequalities and that of the standard Hardy's inequality.  We find that for $n\ge 4$, the generalized Hardy's inequalities can provide lower visibilities than the standard one.
\end{remark}

\begin{figure}[t]
\includegraphics[width=65mm]{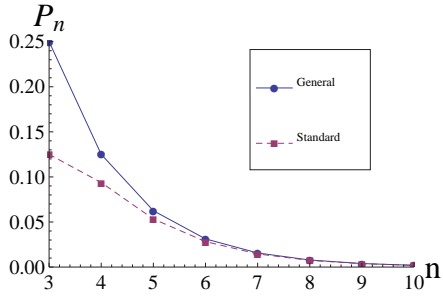}\\
\caption{(color online). The success probability $P_{n}$ versus particle number $n$ $(n\ge 3)$. The blue points correspond to $P^{\rm G}_{n}$ in the generalized Hardy's paradox (with $|\alpha|=|\beta|={\rm even\; number}$), and  the purple points correspond to $P^{\rm S}_{n}$ in the standard Hardy's paradox (with $|\alpha|=n, |\beta|=1$). The relation $P^{\rm G}_{n}> P^{\rm S}_{n}$
implies that there are always a stronger paradox in comparison to the standard Hardy's paradox for $n\ge 3$.
Especially, in the $[3;2,2]$-scenario we have $P^{\rm G}_{3}=1/4$, which is twice of $P^{\rm S}_{3}=1/8$. Thus, it is feasible to observe the stronger quantum paradox in a three-qubit system.
  }\label{fig1}
\end{figure}

\begin{table*}[htbp]
\caption{\label{vis}Numerical results of threshold visibility $V_{\rm thr}[n;|\alpha|,|\beta|;1,1]$ for violations of inequality $\mathcal{I}[n;|\alpha|,|\beta|;1,1]\le 0$ by the $n$-qubit GHZ states. The boxed number represents the lowest visibility for each $n$. For $n\ge 4$, the generalized Hardy's inequalities can provide lower visibility than the standard one (which corresponds to $|\alpha|=n, |\beta|=1$). For $|\alpha|=|\beta|=q$,  ( $q$ is even, $2q\le n+1$), one may have the analytical expression $V_{\rm thr}[n;q,q;1,1]=\frac{2\binom{n}{q}-\binom{[n/2]}{q}-\binom{n-[n/2]}{q}}{2\binom{n}{q}+\binom{[n/2]}{q}+\binom{n-[n/2]}{q}}$. For $q=2$, we have $V_{\rm thr}[n;2,2;1,1]=	 \frac{3[\frac{n+1}{2}]-1}{5[\frac{n+1}{2}]-3}$. It can be proved that for the case of $x=y=1$, for GHZ states with $n \geq 5$, the relation $|\alpha|=|\beta|=2$ is the best choice for generalized Hardy's inequality~\cite{supp}.}
\begin{ruledtabular}
\begin{tabular}{lccccccccc}
  $n$&$3$&$4$&$5$&$6$&$7$&$8$&$9$&$10$\\
	\hline
			$|\alpha|=n,|\beta|=1 $ & \fbox{$0.681250$}&$0.707107$&$0.737431$&$0.764501$&$0.787467$&$0.806795$&$0.823130$&$0.837049$\\
			 $|\alpha|=2,|\beta|=1$&$0.682242$&$0.703526$&$0.730699$&$0.755929$&$0.777878$&$0.796691$&$0.812819$&$0.826718$\\
			 $|\alpha|=n-1,|\beta|=1$&$0.682242$&$\fbox{0.671442}$&$0.702481$&$0.734966$&$0.763073$&$0.786584$&$0.806221$&$0.822742$\\
 $|\alpha|=2,|\beta|=2$&$0.714286$&$0.714286$&$\fbox{0.666667}$&$\fbox{0.666667}$&$\fbox{0.647059}$&$\fbox{0.647059}$&
 $\fbox{0.636364}$&$\fbox{0.636364}$
\end{tabular}
\end{ruledtabular}

\end{table*}

\emph{Experimental proposal to observe the stronger paradox with three qubits.}---A number of experimental tests of the two-qubit Hardy¡¯s paradox have
been carried out since 1993~\cite{Hardy-Exp-1,Hardy-Exp-2,Hardy-Exp-3,Hardy-Exp-4,Hardy-Exp-5,Hardy-Exp-6,Hardy-Exp-7,Hardy-Exp-8,Hardy-Exp-9}. The maximal success probability for two-qubit Hardy's paradox is $(5\sqrt{5}-11)/2\simeq9\%$, which does not occur for the maximally entangled state~\cite{Hardy93}\cite{Cereceda2004}. For the three-qubit standard Hardy's paradox, the success probability is given by $P^{\rm S}_{3}=1/8=0.125$, which occurs for the GHZ state. To our knowledge, such an experiment has not yet been demonstrated. The higher the success probability, the more friendly the experimental observation. Here, we present an experimental proposal to observe stronger paradox in the $[n=3;|\alpha|=2,|\beta|=2]$-scenario, whose success probability is $P^{\rm G}_{3}=1/4=0.25$. In the experiment, the resource is prepared as the three-qubit GHZ state $|\Psi\rangle_{\rm GHZ}=(|000\rangle + |111\rangle)/\sqrt{2}$,
and three qubits are sent to three observers Alice, Bob and Charlie separately (see Fig.~\ref{fig2}). Quantum mechanically, the three observers will all perform the same measurements in $\hat{x}$- and $\hat{y}$-direction respectively, i.e.,
\begin{eqnarray*}
\hat{a}_1=\hat{a}_2=\hat{a}_3= |+x\rangle \langle+x|,~~
\hat{b}_1=\hat{b}_2=\hat{b}_3= |+y\rangle \langle+y|,
\end{eqnarray*}
with $\overline{\hat{b}}_j=\openone-\hat{b}_j=|-y\rangle \langle-y|$, $(j=1,2,3)$, $\openone$ is the $2\times 2$ unit matrix, and $|+x\rangle =\frac{1}{\sqrt{2}}(|0\rangle +|1\rangle)$,
$|\pm y\rangle =\frac{1}{\sqrt{2}}(|0\rangle \pm i|1\rangle)$.

Firstly one needs to experimentally verify the zero-probability conditions, i.e.,
\begin{equation}\label{check1}
\begin{split}
&p(\hat{b}_1 \hat{b}_2 \hat{a}_3)= p(\hat{b}_1 \hat{a}_2 \hat{b}_3)=p(\hat{a}_1 \hat{b}_2 \hat{b}_3)\\
&~~=p(\overline{\hat{b}}_1 \overline{\hat{b}}_2 \hat{a}_3)= p(\overline{\hat{b}}_1 \hat{a}_2 \overline{\hat{b}}_3)=p(\hat{a}_1 \overline{\hat{b}}_2 \overline{\hat{b}}_3)=0,
\end{split}
\end{equation}
with $p(\hat{b}_1 \hat{b}_2 \hat{a}_3)= {\rm tr}[\rho (\hat{b}_1 \otimes\hat{b}_2\otimes \hat{a}_3)]$, etc, and $\rho$ stands for the GHZ state.
  Equations (\ref{check1}) are automatically satisfied in quantum theory. Secondly, one will experimentally measure the success probability, i.e., the last one in \textbf{Theorem 1}, whose theoretical quantum prediction is given by
\begin{eqnarray}\label{check3}
&p(\hat{a}_1 \hat{a}_2 \hat{a}_3)= {\rm tr}[\rho (\hat{a}_1 \otimes\hat{a}_2\otimes \hat{a}_3)]=\frac{1}{4}.
\end{eqnarray}

\begin{figure}[t]
\includegraphics[width=55mm]{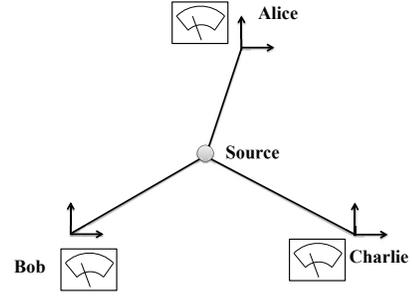}\\
\caption{(color online). Illustration of experimental observation for the stronger paradox in the $[n=3;|\alpha|=2,|\beta|=2]$-scenario.
   A three-qubit GHZ state is initially prepared in the source, from which each local qubit is then sent to Alice, Bob, and Charlie, respectively,
   who then randomly choose either of their two distinct directions ($\hat{x}$ and $\hat{y}$ directions) to perform the measurements. Within the experimental errors, one needs to verify the zero-probability condition (\ref{check1}) and measure the success probability in (\ref{check3}) test the quantum paradox.
 }\label{fig2}
\end{figure}

Taking into account experimental errors due to environment noise such that the six probabilities in (\ref{check1}) are not exactly zeros by measurements, let us denote the conditions as $p(\hat{b}_1 \hat{b}_2 \hat{a}_3)= p(\hat{b}_1 \hat{a}_2 \hat{b}_3)=p(\hat{a}_1 \hat{b}_2 \hat{b}_3)=p(\overline{\hat{b}}_1 \overline{\hat{b}}_2 \hat{a}_3)= p(\overline{\hat{b}}_1 \hat{a}_2 \overline{\hat{b}}_3)=p(\hat{a}_1 \overline{\hat{b}}_2 \hat{\overline{b}}_3)=\epsilon$. With the aid of the inequality $\mathcal{I}[3;2,2;1,1]= a_1 a_2 a_3-b_1 b_2 a_3-b_1 a_2 b_3-a_1 b_2 b_3-\overline{b}_1 \overline{b}_2 a_3-\overline{b}_1 a_2 \overline{b}_3-a_1 \overline{b}_2 \overline{b}_3 \le 0$, if one can observe the violation then he must have $1/4-6 \epsilon> 0$. Thus the maximal tolerant of measurement error is $\epsilon < 1/24\approx 0.041$.

\emph{Conclusions and discussion.}---While Hardy's paradox and Hardy's inequality have been generalized to arbitrary $n$ qubits by Cereceda, we have found that Cereceda's way of extension is not the unique one. In this paper, we have presented the most general framework for $n$-particle Hardy's paradox and Hardy's inequality. For $n\ge 3$ the generalized paradox may possess higher success probability, thus is stronger than the standard Hardy's paradox. And for GHZ states with $n\geq 3$, $|\alpha|=|\beta|= {\rm even\; number}$ is the best choice for generalized Hardy's paradoxes. For $n\ge 4$, the generalized Hardy's inequalities resist more noise than the standard Hardy's inequality (one can also adopt the generalized Hardy's inequality to prove Gisin's theorem, which we shall discuss elsewhere). Particularly in consideration of Table~\ref{vis} and~\cite{supp}, our result shows that for GHZ states with $n \geq 5$, the relation $|\alpha|=|\beta|=2$ is the best choice for generalized Hardy's inequality $\mathcal{I}[n;|\alpha|,|\beta|;x=1,y=1]\leq 0$. Moreover, in the three-qubit system, we have also designed  a feasible experiment proposal to observe the stronger quantum paradox. In our opinion, the results here advance the study of Bell's nonlocality both with and without inequality. we anticipate the experimental work in this direction in the near future.

\begin{acknowledgments}
S.H.J. and Z.P.X. contributed equally to this work. J.L.C. is supported by National Natural Science Foundations of China (Grant No.\ 11475089). H.Y.S. acknowledges the Visiting Scholar Program of Chern Institute of Mathematics, Nankai University.
\end{acknowledgments}

\end{document}